\definecolor{darkred}{RGB}{100,0,0}
\definecolor{darkgreen}{RGB}{0,100,0}
\definecolor{darkblue}{RGB}{0,0,150}
\definecolor{red}{RGB}{255,0,0}
\newtheorem{theorem}{Theorem}[section]
\newtheorem{lemma}[theorem]{Lemma}
\newtheorem{corollary}[theorem]{Corollary}
\newcommand{\R}{\mathbb{R}}
\renewcommand{\C}{\mathbb{C}}
\newcommand{\<}{\langle}
\renewcommand{\>}{\rangle}
\newcommand{\goto}{\rightarrow}
\renewcommand{\P}{\operatorname{\mathbb{P}}}
\newcommand{\E}{\operatorname{\mathbb{E}}}
\newcommand{\cA}{\mathcal{A}}
\newcommand{\cN}{\mathcal{N}}
\newcommand{\vct}[1]{\bm{#1}}
\newcommand{\mtx}[1]{\bm{#1}}
\newcommand{\lspan}[1]{\operatorname{span}{#1}}
\newcommand{\rank}{\operatorname{rank}}
\newcommand{\tr}{\operatorname{tr}}
\numberwithin{equation}{section}
\definecolor{eac}{RGB}{200,50,50}
\definecolor{mad}{RGB}{50,200,50}
\definecolor{ejc}{RGB}{50,50,200}
\numberwithin{equation}{section}
\def \endprf{\hfill {\vrule height6pt width6pt depth0pt}\medskip}
\newenvironment{proof}{\noindent {\bf Proof} }{\endprf\par}
\begin{document}

\title{Solving Quadratic Equations via PhaseLift\\
  when There Are About As Many Equations As Unknowns} \author{
  Emmanuel J.~Cand\`es%
  \footnote{Departments of Mathematics and Statistics, Stanford
    University, Stanford CA 94305 
} \
and Xiaodong Li \footnote{Department of Mathematics, Stanford
  University, Stanford CA 94305
}
}

\date{August 2012}
\maketitle

\begin{abstract}
  This note shows that we can recover any complex vector $\vct{x}_0
  \in \C^n$ exactly from on the order of $n$ quadratic equations of
  the form $|\<\vct{a}_i, \vct{x}_0\>|^2 = b_i$, $i = 1, \ldots, m$,
  by using a semidefinite program known as PhaseLift. This improves
  upon earlier bounds in \cite{PhaseLift2}, which required the number
  of equations to be at least on the order of $n \log n$. Further, we
  show that exact recovery holds for all input vectors simultaneously,
  and also demonstrate optimal recovery results from noisy quadratic
  measurements; these results are much sharper than previously known
  results.
\end{abstract}

\medskip


\section{Introduction}
\label{sec:intro}

Suppose we wish to solve quadratic equations of the form 
\begin{equation}
  \label{eq:quadratic}
  |\<\vct{a}_i, \vct{x}_0\>|^2 = b_i, \quad i = 1, \ldots, m,  
\end{equation}
where $\vct{x}_0 \in \mathbb{C}^n$ is unknown and $\vct{a}_i \in
\mathbb{C}^n$ and $b_i \in \mathbb{R}$ are given. This is a fundamental
problem which includes all phase retrieval problems in which one
cannot measure the phase of the linear measurements $\<\vct{a}_i,
\vct{x}\>$, only their magnitude. Recently,
\cite{PhaseLift1,PhaseLift2} proposed finding solutions to
\eqref{eq:quadratic} by convex programming techniques. The idea can be
explained rather simply: lift the problem in higher dimensions and
write $\mtx{X} = \vct{x}\vct{x}^*$ so that \eqref{eq:quadratic} can be
formulated as
\begin{equation*}
\label{eq:NP}
  \begin{array}{ll}
    \text{find}   & \quad \mtx{X}\\
    \text{subject to} & \quad  \mtx{X} \succeq 0, \, \, \rank(\mtx{X}) = 1,\\
& \quad \tr(\vct{a}_i \vct{a}_i^* \mtx{X}) = b_i, \, \, i = 1, \ldots m. 
\end{array}
\end{equation*}
Then approximate this combinatorially hard problem by using a convex
surrogate for the nonconvex rank functional: {\em PhaseLift}
\cite{PhaseLift1,PhaseLift2} is the relaxation
\begin{equation}
  \label{eq:PL}
 \begin{array}{ll}
   \text{minimize}   & \quad \tr(\mtx{X})\\
   \text{subject to} & \quad  \mtx{X} \succeq 0,\\
   & \quad \tr(\vct{a}_i \vct{a}_i^* \mtx{X}) = b_i, \, \, i = 1, \ldots, m. 
\end{array}
\end{equation}
The main result in \cite{PhaseLift2} states that if the equations are
sufficiently randomized and their number $m$ is at least on the order
of $n \log n$, then the solution to the convex relaxation
\eqref{eq:PL} is exact.
\begin{theorem}[\cite{PhaseLift2}]
\label{teo:PL}
Fix $\vct{x}_0 \in \mathbb{C}^n$ arbitrarily and 
suppose that
\begin{equation}
\label{eq:PL}
m \ge c_0 \, n \log n,
\end{equation} 
where $c_0$ is a sufficiently large constant. Then in all models
introduced below, PhaseLift is exact with probability at least $1 -
3e^{-\gamma \frac{m}{n}}$ ($\gamma$ is a positive numerical constant)
in the sense that \eqref{eq:PL} has a unique solution equal to
$\vct{x}_0 \vct{x}_0^*$.\footnote{Upon retrieving $\hat{\mtx{X}} =
  \vct{x}_0 \vct{x}_0^*$, a simple factorization recovers $\vct{x}_0$
  up to global phase, i.e.~multiplication by a complex scalar of unit
  magnitude.} 
\end{theorem}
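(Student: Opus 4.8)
The plan is to prove exactness by exhibiting an (approximate) dual certificate, the standard device for certifying that a rank-one matrix is the unique minimizer of a trace-minimization SDP. Set $\bX_0 = \vct{x}_0\vct{x}_0^*$, write $\cA$ for the sampling map $\bH \mapsto (\langle \vct{a}_i\vct{a}_i^*, \bH\rangle)_{i=1}^m$, and let $T = \{\vct{x}_0\vct{v}^* + \vct{v}\vct{x}_0^* : \vct{v}\in\C^n\}$ be the tangent space at $\bX_0$ to the variety of Hermitian rank-one matrices, with $\cP_T, \cP_{T^\perp}$ the associated orthogonal projections; note $\cP_T\bI = \vct{x}_0\vct{x}_0^*/\|\vct{x}_0\|^2$. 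The Lagrangian dual of PhaseLift shows $\bX_0$ is optimal as soon as there is $\vct{y}\in\R^m$ with $\mtx{W} := \bI - \cA^*\vct{y} \succeq 0$ and $\mtx{W}\vct{x}_0 = 0$; for \emph{uniqueness} it suffices, in addition, that $\cA$ be injective on $T$. I would actually use the robust ``inexact'' version: (i) $\cA$ injective on $T$, and (ii) a $\vct{y}$ for which $\mtx{W}\succeq 0$, $\|\mtx{W}\vct{x}_0\|$ is tiny, and $\cP_{T^\perp}\mtx{W}$ is bounded below by a fixed multiple of the identity on $T^\perp$ (since $\mtx{W}$ is Hermitian, $\mtx{W}\vct{x}_0\approx 0$ already forces $\cP_T\mtx{W}\approx 0$). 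For a feasible $\bX_0+\bH$ that is also optimal, so $\cA(\bH)=0$ and $\tr(\bH)=0$, complementary slackness gives $0 = \tr(\bH) = \langle\mtx{W},\bX_0+\bH\rangle$ with $\mtx{W},\bX_0+\bH\succeq 0$, whence $\range(\bX_0+\bH)\subseteq\ker\mtx{W}$; since $\cP_{T^\perp}\bH\succeq 0$ this pins $\cP_{T^\perp}\bH$ down as negligible, and injectivity on $T$ then forces $\bH=0$.

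I would next dispatch the injectivity of $\cA$ on $T$, the easy half, which needs only $m \gtrsim n$. Every model below is isotropic ($\E\,\vct{a}_i\vct{a}_i^* = \bI$) with controlled low-order moments, so $\E\,\cA^*\cA$, compressed to $T$, is a fixed, well-conditioned operator, and it suffices to show $\tfrac{1}{m}\cA^*\cA$ concentrates around its mean on $T$. For a fixed $\bH\in T$ of unit Frobenius norm, $\tfrac{1}{m}\sum_i\langle\vct{a}_i\vct{a}_i^*,\bH\rangle^2$ is an average of i.i.d.\ subexponential variables and concentrates by Bernstein's inequality once $m\gtrsim n$; a union bound over an $\varepsilon$-net of the unit sphere of $T$ (a real space of dimension $\sim 2n$, hence a net of cardinality $e^{O(n)}$) upgrades this to a uniform bound on $T$, with failure probability at most $e^{-\gamma m/n}$.

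The heart of the proof — and where the $n\log n$ is spent — is the construction of $\vct{y}$. I would build it either by a golfing scheme (partition the $m$ equations into $L\sim\log n$ independent batches and iteratively correct the residual of the candidate certificate on $T$, each batch contracting it by a fixed factor, so after $L$ rounds it is negligible) or by taking the least-squares solution that matches $\cP_T\cA^*\vct{y}$ to its target $\cP_T\bI$ and then estimating $\|\cP_{T^\perp}\cA^*\vct{y}\|$ directly. In either case the analysis reduces to a matrix Bernstein / noncommutative Khintchine bound for a sum of independent Hermitian operators built from the $\vct{a}_i\vct{a}_i^*$. The obstacle is that $\|\vct{a}_i\|^2$ concentrates at $n$, so the individual summands have spectral norm of order $n$ and the variance proxy grows like $n^2$ per normalized term; matrix Bernstein therefore pays a $\log(\dim)=\log n$ factor, and one needs $m\gtrsim n\log n$ to drive the resulting deviation below the constant threshold demanded in (ii). Making this rigorous requires truncating the atypically large vectors $\vct{a}_i$ and checking that the truncation is harmless — this heavy-tailed $\|\vct{a}_i\|^4$ behavior is exactly the difficulty, and eliminating the spurious $\log n$ it creates is what the sharper argument later in the paper has to accomplish.

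Finally I would assemble the pieces: a union bound over the injectivity event and the handful of concentration events controlling the certificate produces the stated success probability $1 - 3e^{-\gamma m/n}$, and since every step uses only isotropy, a subgaussian-type tail, and shared moment bounds, the conclusion holds uniformly across all the sampling models introduced below; factoring the resulting solution $\bX_0 = \vct{x}_0\vct{x}_0^*$ recovers $\vct{x}_0$ up to a unit-modulus scalar.
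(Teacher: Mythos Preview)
This theorem is not proved in the present paper; it is quoted from \cite{PhaseLift2} as background for the sharper Theorem~\ref{teo:main}, so there is no proof here to compare your proposal against. That said, your sketch is a faithful outline of the argument in \cite{PhaseLift2}: an approximate dual certificate combined with an injectivity-type property of $\cA$ on low-rank matrices, with the $n\log n$ requirement traced to a matrix Bernstein / operator concentration bound applied to sums of rank-one terms $\vct{a}_i\vct{a}_i^*$ of spectral norm $\sim n$.

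For context on how the present paper departs from this when proving the stronger Theorem~\ref{teo:main}: it uses neither golfing nor a least-squares certificate. Lemma~\ref{lem:core} instead writes down an explicit one-shot certificate
\[
\mtx{Y} = \frac{1}{m}\sum_{i\in[m]}\bigl[\,|\langle\vct{a}_i,\vct{x}_0\rangle|^2\,\mathbbm{1}(|\langle\vct{a}_i,\vct{x}_0\rangle|\le 3) - \beta_0\,\bigr]\,\vct{a}_i\vct{a}_i^T
\]
and controls its $T$ and $T^\perp$ components separately by sub-Gaussian and sub-exponential concentration; the truncation tames the heavy tail of $|\langle\vct{a}_i,\vct{x}_0\rangle|^4$ and is precisely what sidesteps the $\log n$ loss you correctly identify. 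Also, the supporting ``injectivity'' input here is not a two-sided bound on $m^{-1}\cA^*\cA$ restricted to $T$, but the pair of $\ell_1$ estimates in Lemma~\ref{lem:PL} (an upper bound for PSD matrices and a lower bound for rank-two matrices), which already come from \cite{PhaseLift2} and are closer in spirit to an RIP-$1$ condition than to the $\ell_2$ concentration you outline.
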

The models above are either complex or real depending upon whether
$\vct{x}_0$ is complex or real valued. In all cases the $\vct{a}_i$'s
are independently and identically distributed with the following
distributions:
\begin{itemize}
\item {\em Complex models.} The uniform distribution on the complex
  sphere of radius $\sqrt{n}$, or the complex normal distribution
  $\mathcal{N}(0, \mtx{I_n}/2) + i\mathcal{N}(0, \mtx{I_n}/2)$.
\item {\em Real models.} The uniform distribution on the sphere of
  radius $\sqrt{n}$, or the normal distribution $\mathcal{N}(0,
  \mtx{I_n})$.
\end{itemize}

Clearly, one needs at least on the order of $n$ equations to have a
well posed problem, namely, a unique solution to
\eqref{eq:quadratic}.\footnote{The work in \cite{Edidin} shows that
  with probability one, $m = 4n-2$ randomized equations as in Theorem
  \ref{teo:PL} are sufficient for the intractable phase retrieval
  problem \eqref{eq:quadratic} to have a unique solution.} This raises
natural questions: 
\begin{description}
\item {\em Does the convex relaxation \eqref{eq:PL} with a number of
    equations on the order of the number of unknowns succeed?  Or is
    the lower bound \eqref{eq:PL} sharp? 
  \item Is it possible to improve the guaranteed probability of
    success?
\item Can we hope for a universal result stating that once
  the vectors $\vct{a}_i$ have been selected, \underline{all} input signals
  $\vct{x}_0$ can be recovered?}
\end{description}
This paper answers these questions.
\begin{theorem}
\label{teo:main}
Consider the setup of Theorem \ref{teo:PL}. Then for all $\vct{x}_0$
in $\C^n$ or $\R^n$, the solution to PhaseLift is exact with
probability at least $1 - O(e^{-\gamma m})$ if the number of equations
obeys
  \begin{equation}
\label{eq:main}
m \ge c_0 \,  n,
\end{equation}
where $c_0$ is a sufficiently large constant. Thus, exact recovery
holds simultaneously over all input signals.
\end{theorem}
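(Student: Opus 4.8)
The plan is to follow the standard dual-certificate route for trace-minimization semidefinite programs, but to replace the golfing-scheme certificate of \cite{PhaseLift2} --- whose $\Theta(\log n)$ rounds are precisely what forces $m\gtrsim n\log n$ there --- by an explicit one-shot certificate whose analysis costs only $O(n)$ measurements. By scale-covariance of \eqref{eq:quadratic} and rotational invariance of all four distributions of the $\vct a_i$, it suffices to take $\vct x_0=\vct e_1$ (and, for the universal statement, every $\vct x_0$ on the unit sphere). Write $\mtx A_i=\vct a_i\vct a_i^*$, $\mathcal A(\mtx X)=(\<\mtx A_i,\mtx X\>)_{i\le m}$, $\mtx X_0=\vct x_0\vct x_0^*$, let $T=\{\vct x_0\vct h^*+\vct h\vct x_0^*:\vct h\in\C^n\}$ be the tangent space at $\mtx X_0$, and let $\mathcal P_T,\mathcal P_{T^\perp}$ be the Frobenius-orthogonal projections. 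The deterministic engine is the usual uniqueness lemma for \eqref{eq:PL}: $\mtx X_0$ is the unique minimizer provided that (i) $\tfrac1m\|\mathcal A(\mtx H)\|_{\ell_1}\ge\kappa\|\mtx H\|_F$ for all $\mtx H\in T$ (in fact I would prove this for \emph{all} Hermitian $\mtx H$ of rank $\le2$, making it $\vct x_0$-free), (ii) $\|\tfrac1m\sum_i\mtx A_i\|\le2$, and (iii) there is a certificate $\mtx Y=\mathcal A^*(\vct\nu)$ with $\|\mathcal P_T(\mtx Y)-\mtx X_0\|_F<\kappa/4$ and $\mathcal P_{T^\perp}(\mtx Y)\preceq\tfrac12\mathcal P_{T^\perp}(\mtx I)$. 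The lemma is short: for feasible $\mtx X_0+\mtx H\succeq0$ one has $\mathcal P_{T^\perp}(\mtx H)\succeq0$ and $\<\mtx Y,\mtx H\>=\<\vct\nu,\mathcal A(\mtx H)\>=0$, so $\tr\mtx H=\<\mtx I-\mtx Y,\mtx H\>\ge\tfrac12\|\mathcal P_{T^\perp}\mtx H\|_*-\tfrac\kappa4\|\mathcal P_T\mtx H\|_F$; positivity of $\mathcal P_{T^\perp}\mtx H$ turns $\|\mathcal A(\mathcal P_{T^\perp}\mtx H)\|_{\ell_1}$ into $\<\sum_i\mtx A_i,\mathcal P_{T^\perp}\mtx H\>\le2m\|\mathcal P_{T^\perp}\mtx H\|_*$ by (ii), which with (i) forces $\|\mathcal P_T\mtx H\|_F\le(2/\kappa)\|\mathcal P_{T^\perp}\mtx H\|_*$, whence $\tr\mtx H>0$ unless $\mtx H=0$.

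Conditions (i) and (ii) I would obtain by a chaining/net argument over a low-dimensional set: the unit-Frobenius rank-$\le2$ Hermitian matrices form an $O(n)$-dimensional manifold, the variables $|\vct a_i^*\mtx H\vct a_i|$ are sub-exponential with $O(1)$ parameter (a combination of at most two independent $\chi^2$'s) with $\E|\vct a_1^*\mtx H\vct a_1|$ bounded below uniformly, so Bernstein together with a Dudley-type bound --- which for sub-exponential increments over an $n$-dimensional index set is $\lesssim\sqrt{n/m}+n/m$, with no $\log n$ --- and Talagrand's concentration inequality give (i) and (ii) with probability $1-O(e^{-\gamma m})$ once $m\ge c_0n$. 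Since (i) is proved for all rank-$\le2$ matrices at once, it already holds simultaneously over all $\vct x_0$.

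The heart of the argument is (iii), and here the proof departs from \cite{PhaseLift2}. I would take the explicit one-shot certificate
\[
  \mtx Y_{\vct x_0}=\frac{1}{(1-\theta)\,m}\sum_{i=1}^m\bigl(\phi(b_i)-c_\beta\bigr)\,\vct a_i\vct a_i^*,\qquad b_i=|\<\vct a_i,\vct x_0\>|^2,
\]
where $\phi$ is a fixed smooth truncation equal to the identity on $[0,\beta]$, vanishing on $[\beta+1,\infty)$ and bounded by $\beta+1$, $c_\beta=\E\,\phi(b_1)$, $1-\theta=\E[\phi(b_1)(b_1-1)]$, and $\beta$ is an \emph{absolute} constant. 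A moment computation in each model --- using that $\<\vct a_i,\vct x_0\>$ is (essentially) independent of the part of $\vct a_i$ orthogonal to $\vct x_0$ --- shows $\E\,\mtx Y_{\vct x_0}=\vct x_0\vct x_0^*$, so both requirements in (iii) follow from the single bound $\|\mtx Y_{\vct x_0}-\vct x_0\vct x_0^*\|_{\mathrm{op}}<\delta$ (because $\mathcal P_T$ outputs rank-$\le2$ matrices and $\mathcal P_{T^\perp}$ is a compression, both are $O(1)$-bounded in operator norm). Truncating the weight at a \emph{constant} level is crucial: it makes each summand $(\phi(b_i)-c_\beta)|\<\vct u,\vct a_i\>|^2$ sub-exponential rather than a product of three Gaussians --- which would give only $e^{-c\sqrt m}$ tails --- so a scalar Bernstein bound for a fixed test vector $\vct u$ decays like $e^{-cm}$; and since $\E\,\mtx Y_{\vct x_0}$ is exactly correct, the truncation introduces no bias. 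One then controls $\sup_{\|\vct u\|=1}|\vct u^*(\mtx Y_{\vct x_0}-\vct x_0\vct x_0^*)\vct u|$ by an $e^{O(n)}$-point net of the sphere rather than by matrix Bernstein --- this is what removes the dimensional $\log n$ that non-commutative concentration would cost.

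The main obstacle will be making this last estimate \underline{uniform in $\vct x_0$} with probability $1-O(e^{-\gamma m})$ at $m\asymp n$: one must bound $\sup_{\vct x_0,\vct u,\vct v}\bigl|\vct u^*\mtx Y_{\vct x_0}\vct v-\<\vct x_0,\vct u\>\<\vct v,\vct x_0\>\bigr|$ over the $O(n)$-dimensional index set of unit triples, the $\vct x_0$-dependence entering only through $\phi(|\<\vct a_i,\vct x_0\>|^2)$. A naive net over $\vct x_0$ at a polynomially small mesh would reintroduce a $\log n$; the remedy is a chaining argument in the sub-exponential metric, in which $\vct x_0\mapsto\mtx Y_{\vct x_0}$ has increments of size $O(1)$ per unit change of $\vct x_0$, keeping the relevant $\gamma$-functionals over the $n$-dimensional parameter set at order $\sqrt{n/m}$; Talagrand's inequality then upgrades the expectation bound to the claimed exponential-probability bound. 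Carrying out this chaining (and reconciling its sub-Gaussian and sub-exponential parts) is the technically demanding step; everything else is bookkeeping around the SDP duality above. Finally, the noisy statements mentioned in the abstract follow from the same two pillars --- (i) supplies the stability constant and (iii) controls the error component outside $T$.
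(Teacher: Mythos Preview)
Your proposal is essentially correct and hits the same key idea as the paper: replace the golfing certificate by an explicit one-shot certificate with weights truncated at a \emph{constant} level --- the paper takes $\lambda_i = m^{-1}\bigl(|\langle \vct a_i,\vct x_0\rangle|^2\,\mathbbm{1}(|\langle \vct a_i,\vct x_0\rangle|\le 3)-\beta_0\bigr)$, which is your $\phi$ up to cosmetics --- so that each summand is sub-exponential and a scalar Bernstein plus a sphere-net (rather than matrix Bernstein) delivers $e^{-\gamma m}$ tails for fixed $\vct x_0$.

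The one place you diverge is the uniformity over $\vct x_0$, and here the paper is substantially simpler than what you propose. You worry that a polynomially fine net would reintroduce a $\log n$ and therefore plan a chaining argument in the sub-exponential metric. The paper avoids all of that: it builds certificates only at the points of a \emph{constant}-mesh $\epsilon$-net (hence $e^{O(n)}$ points, handled by a plain union bound once $m\ge c_0 n$), and then for an arbitrary unit vector $\vct x$ with nearest net point $\vct x_0$ it \emph{re-uses the same} certificate $\mtx Y=\mtx Y(\vct x_0)$. The point is that the certificate conditions only involve $\vct x$ through the projections onto $T_{\vct x}$ and $T_{\vct x}^\perp$, and these change Lipschitz-continuously: with $\mtx\Delta=\vct x\vct x^*-\vct x_0\vct x_0^*$ one has $\|\mtx Y_{T_{\vct x}^\perp}-\mtx Y_{T_{\vct x_0}^\perp}\|\lesssim \|\mtx Y\|\,\|\mtx\Delta\|\lesssim\epsilon$ (and similarly for the $T$-part), where $\|\mtx Y\|\le 2$ is already controlled by the fixed-$\vct x_0$ estimates. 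So a constant $\epsilon$ suffices and no chaining is needed. As a side benefit, the paper's deterministic lemma actually shows something stronger than your step (iii): under these certificate conditions $\mtx X_0$ is the unique \emph{feasible} point, so the trace objective in \eqref{eq:PL} plays no role on the good event.
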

In words, (1) the solution to most systems of quadratic equations can
be obtained by semidefinite programming as long as the number of
equations is at least a constant times the number of unknowns; (2) the
probability of failure is {\em exponentially} small in the number of
measurements, a significant sharpening of Theorem \ref{teo:PL}; (3)
these properties hold universally as explained above.

Letting $\cA: \C^{n \times n} \goto \R^m$ be the linear map
$\cA(\mtx{X}) = \{\tr(\vct{a}_i \vct{a}_i^* \mtx{X})\}_{1 \le i \le
  m}$, Theorem \ref{teo:PL} states that with high probability, the
null space of $\cA$ is tangent to the positive semidefinite (PSD) cone
$\{\mtx{X}: \mtx{X} \succeq \mtx{0}\}$ at a {\em fixed} $\vct{x}_0 \in
\C^n$. In constrast, Theorem \ref{teo:main} asserts that this
nullspace is tangent to the PSD cone at {\em all} rank-one
elements. Mathematically, what makes this possible is the sharpening
of the probability bounds; that is to say, the fact that for a fixed
$\vct{x}_0$, recovery holds with probability at least $1 -
O(e^{-\gamma m})$. Importantly, this improvement cannot be obtained
from the proof of Theorem \ref{teo:PL}. For instance, the argument in
\cite{PhaseLift2} does not allow removing the logarithmic factor in
the number of equations; consequently, although the general
organization of our proof is similar to that in \cite{PhaseLift2}, a
different argument is needed.

In most applications of interest, we do not have noiseless data but
rather observations of the form
\begin{equation}
  \label{eq:quadratic-noise}
  b_i = |\<\vct{a}_i, \vct{x}_0\>|^2 + w_i, \quad i = 1, \ldots, m,  
\end{equation}
where $w_i$ is a noise term. Here, we suggest recovering the signal by
solving
\begin{equation}
  \label{eq:PLn}
 \begin{array}{ll}
   \text{minimize}   & 
   \quad \sum_{1\le i \le m} |\tr(\vct{a}_i \vct{a}_i^* \mtx{X}) - b_i|\\
   \text{subject to} & \quad  \mtx{X} \succeq 0.
\end{array}
\end{equation}
The proposal cannot be simpler: find the positive semidefinite matrix
$\mtx{X}$ that best fits the observed data in an $\ell_1$ sense. One
can then extract the best-rank one approximation to recover the
signal. Our second result states that this procedure is accurate. 
\begin{theorem}
\label{teo:stability}
Consider the setup of Theorem \ref{teo:main}. Then for all $\vct{x}_0
\in \C^n$, the solution to \eqref{eq:PLn} obeys
\begin{equation}
  \label{eq:stability}
  \|\hat{\mtx{X}} - \vct{x}_0\vct{x}_0^*\|_F \le C_0 \frac{\|\vct{w}\|_1}{m}
\end{equation}
for some numerical constant $C_0$. For the Gaussian models, this holds
with the same probability as in the noiseless case whereas the
probability of failure is exponentially small in $n$ in the uniform
model. By finding the largest eigenvector with largest eigenvalue of
$\hat{\mtx{X}}$, one can also construct an estimate obeying
\begin{equation}
  \label{eq:stability2}
  \|\hat{\vct{x}} - e^{i\phi} \vct{x}_0\|_2 \le C_0 \min\Bigl(\|\vct{x_0}\|_2, \frac{\|\vct{w}\|_1}{m \|\vct{x}_0\|_2}\Bigr) 
\end{equation}
for some $\phi \in [0, 2\pi]$.
\end{theorem}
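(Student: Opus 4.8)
\medskip
\noindent\textbf{Proof strategy.} Put $\mtx{X}_0 = \vct{x}_0\vct{x}_0^*$, let $\hat{\mtx{X}}$ be a minimizer of \eqref{eq:PLn}, and set $\mtx{H} = \hat{\mtx{X}} - \mtx{X}_0$. Two facts are deterministic. First, $\mtx{X}_0$ is feasible for \eqref{eq:PLn}, so optimality gives $\|\cA(\hat{\mtx{X}}) - \vct{b}\|_1 \le \|\cA(\mtx{X}_0) - \vct{b}\|_1 = \|\vct{w}\|_1$, whence $\|\cA(\mtx{H})\|_1 \le 2\|\vct{w}\|_1$. Second, writing $u = \vct{x}_0/\|\vct{x}_0\|_2$, $P = \mtx{I}_n - uu^*$, and decomposing $\mtx{H} = \mtx{H}_T + \mtx{H}_{T^\perp}$ relative to the tangent space $T$ to the rank-one matrices at $\mtx{X}_0$, we have $\mtx{H}_{T^\perp} = P\mtx{H}P = P\hat{\mtx{X}}P \succeq 0$, since $\hat{\mtx{X}}\succeq 0$ and $P\mtx{X}_0P = \mtx{0}$; thus $\mtx{H}$ lies in the descent cone of the constraint $\mtx{X}\succeq 0$ at $\mtx{X}_0$.

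The engine is a uniform $\ell_1$ lower bound for $\cA$ on this cone: I would prove that, on an event of the probability stated in the theorem, for every unit $u\in\C^n$ and every Hermitian $\mtx{H}$ with $P\mtx{H}P\succeq 0$,
\begin{equation*}
 \tfrac{1}{m}\|\cA(\mtx{H})\|_1 \;\ge\; c_1\bigl(\|\mtx{H}_T\|_F + \tr(\mtx{H}_{T^\perp})\bigr)
\end{equation*}
for a numerical constant $c_1 > 0$. Since $\|\mtx{H}\|_F^2 = \|\mtx{H}_T\|_F^2 + \|\mtx{H}_{T^\perp}\|_F^2 \le \|\mtx{H}_T\|_F^2 + \bigl(\tr(\mtx{H}_{T^\perp})\bigr)^2 \le \bigl(\|\mtx{H}_T\|_F + \tr(\mtx{H}_{T^\perp})\bigr)^2$ (using $\|\mtx{H}_{T^\perp}\|_F \le \tr(\mtx{H}_{T^\perp})$), combining with $\|\cA(\mtx{H})\|_1 \le 2\|\vct{w}\|_1$ yields \eqref{eq:stability} with $C_0 = 2/c_1$. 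This lower bound is the quantitative strengthening of the injectivity-on-the-cone property behind Theorem \ref{teo:main}, and proving it is the main obstacle — as well as the point at which the argument must differ from \cite{PhaseLift2}, since the cone is genuinely high-dimensional (it contains every PSD matrix supported on $u^\perp$, so its Frobenius sphere already has metric entropy of order $n^2$) and the positive semidefiniteness of $\mtx{H}_{T^\perp}$ must be exploited. I would estimate the two blocks and then combine. The $T^\perp$ block is easy and automatically uniform: $\cA(\mtx{H}_{T^\perp})$ has nonnegative entries, so $\|\cA(\mtx{H}_{T^\perp})\|_1 = \tr\bigl((\sum_i \vct{a}_i\vct{a}_i^*)\mtx{H}_{T^\perp}\bigr) \ge (1-\delta)m\,\tr(\mtx{H}_{T^\perp})$ once $\|\sum_i \vct{a}_i\vct{a}_i^* - m\mtx{I}_n\| \le \delta m$, an event of probability $1-e^{-\gamma m}$ by matrix concentration. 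For the $T$ block, $\langle \vct{a}_i\vct{a}_i^*,\mtx{H}_T\rangle$ is a degree-two polynomial in the Gaussian vector $\vct{a}_i$ with variance comparable to $\|\mtx{H}_T\|_F^2$ (because $\rank(\mtx{H}_T)\le 2$), hence $\E\,|\langle \vct{a}_i\vct{a}_i^*,\mtx{H}_T\rangle| \gtrsim \|\mtx{H}_T\|_F$ by the dimension-free equivalence of $L^1$ and $L^2$ norms for low-degree Gaussian chaos; a Bernstein bound at a fixed $(u,\mtx{H}_T)$, a union bound over an $\varepsilon$-net of the $O(n)$-dimensional set of such pairs (cardinality $e^{O(n)}$, absorbed since $m\ge c_0 n$), and a matching upper bound for continuity off the net promote this to $\tfrac{1}{m}\|\cA(\mtx{H}_T)\|_1 \asymp \|\mtx{H}_T\|_F$ uniformly. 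The delicate step is combining the two estimates without losing to cancellation in $\cA(\mtx{H}) = \cA(\mtx{H}_T) + \cA(\mtx{H}_{T^\perp})$: if one block's norm dominates the other's by a large constant factor the triangle inequality and the two bounds above suffice, while in the balanced regime $\|\mtx{H}_T\|_F \asymp \tr(\mtx{H}_{T^\perp})$ one must prevent the nonnegative vector $\cA(\mtx{H}_{T^\perp})$ from cancelling $\cA(\mtx{H}_T)$ coordinatewise, which I would handle by conditioning on the $u^\perp$-coordinates of each $\vct{a}_i$ — these determine $\langle \vct{a}_i\vct{a}_i^*,\mtx{H}_{T^\perp}\rangle$ and the off-diagonal part of $\langle \vct{a}_i\vct{a}_i^*,\mtx{H}_T\rangle$ — and using anti-concentration of the residual quadratic form in the remaining independent coordinate $\vct{a}_i^*u$.

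For the vector bound \eqref{eq:stability2}, let $\hat{\vct{x}} = \sqrt{\lambda_{\max}(\hat{\mtx{X}})}\,\vct{v}$ with $\vct{v}$ the top unit eigenvector of $\hat{\mtx{X}}$, so $\hat{\vct{x}}\hat{\vct{x}}^*$ is the best rank-one approximation of $\hat{\mtx{X}}$; as $\mtx{X}_0$ is rank one, $\|\hat{\mtx{X}} - \hat{\vct{x}}\hat{\vct{x}}^*\|_F \le \|\hat{\mtx{X}} - \mtx{X}_0\|_F =: \varepsilon$, so $\|\hat{\vct{x}}\hat{\vct{x}}^* - \vct{x}_0\vct{x}_0^*\|_F \le 2\varepsilon \le 2C_0\|\vct{w}\|_1/m$. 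Apply the elementary inequality $\min_{|c|=1}\|\vct{a} - c\vct{b}\|_2 \le \|\vct{a}\vct{a}^* - \vct{b}\vct{b}^*\|_F/\max(\|\vct{a}\|_2,\|\vct{b}\|_2)$: in the case $2\varepsilon < \|\vct{x}_0\|_2^2$ one has $\|\hat{\vct{x}}\|_2^2 \ge \|\vct{x}_0\|_2^2 - 2\varepsilon > 0$, hence $\max(\|\hat{\vct{x}}\|_2,\|\vct{x}_0\|_2)\ge\|\vct{x}_0\|_2$ and $\min_{|c|=1}\|\hat{\vct{x}} - c\vct{x}_0\|_2 \lesssim \varepsilon/\|\vct{x}_0\|_2 \lesssim \|\vct{w}\|_1/(m\|\vct{x}_0\|_2)$; in the complementary case $2\varepsilon \ge \|\vct{x}_0\|_2^2$, reporting the trivial estimate (zero once $\lambda_{\max}(\hat{\mtx{X}})$ falls below the relevant threshold) gives error $\|\vct{x}_0\|_2$, and $\|\vct{x}_0\|_2 \lesssim \|\vct{w}\|_1/(m\|\vct{x}_0\|_2)$ there as well, so the two cases together give \eqref{eq:stability2}. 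Finally, for the probability: in the Gaussian models all the above inputs hold with probability $1 - O(e^{-\gamma m})$, exactly as in the noiseless analysis; in the uniform models one writes $\vct{a}_i = \sqrt{n}\,\vct{g}_i/\|\vct{g}_i\|_2$ with $\vct{g}_i$ standard Gaussian, the rescalings $\|\vct{g}_i\|_2/\sqrt{n}$ lying in $1\pm\delta$ except with probability $e^{-\gamma n}$ per sample, which is the only step degrading $e^{-\gamma m}$ to $e^{-\gamma n}$ and explaining the asymmetry in the statement.
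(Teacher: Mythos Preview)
Your overall architecture matches the paper's: deduce $\|\cA(\mtx{H})\|_1 \le 2\|\vct{w}\|_1$ from optimality, establish that $\|\cA(\mtx{H})\|_1/m$ controls $\|\mtx{H}_T\|_F + \tr(\mtx{H}_{T^\perp})$ on the cone $\{\mtx{H}_{T^\perp}\succeq 0\}$, and combine; your treatment of the vector estimate and of the Gaussian/uniform probability is also in line with the paper. The separate $T$ and $T^\perp$ estimates you invoke are exactly Lemma~\ref{lem:PL}. Where you part ways is at the ``delicate step.'' The paper does \emph{not} attack the cancellation between $\cA(\mtx{H}_T)$ and $\cA(\mtx{H}_{T^\perp})$ by anti-concentration. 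It simply reuses the dual certificate $\mtx{Y}=\cA^*(\vct{\lambda})$ already built in Corollary~\ref{cor:core}, which obeys $\mtx{Y}_{T^\perp}\preceq -\mtx{I}_{T^\perp}$, $\|\mtx{Y}_T\|_F\le 1/2$, and---this is the new ingredient relative to a standard certificate---$\|\vct{\lambda}\|_\infty \le 7/m$. Pairing $\mtx{H}$ with $\mtx{Y}$ yields $\tr(\mtx{H}_{T^\perp}) \le \<\mtx{H}_T,\mtx{Y}_T\> + \|\vct{\lambda}\|_\infty\|\cA(\mtx{H})\|_1 \le \tfrac12\|\mtx{H}_T\|_F + \tfrac{7}{m}\|\cA(\mtx{H})\|_1$; together with the Lemma~\ref{lem:PL} chain $\tr(\mtx{H}_{T^\perp}) \ge \tfrac{0.73}{\sqrt 2}\|\mtx{H}_T\|_F - \tfrac{8}{9m}\|\cA(\mtx{H})\|_1$ and the numerical gap $0.73/\sqrt 2 > 1/2$, this bounds both $\|\mtx{H}_T\|_F$ and $\tr(\mtx{H}_{T^\perp})$ by a constant times $\|\cA(\mtx{H})\|_1/m$ in two lines. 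The $\ell_\infty$ control on $\vct{\lambda}$ is precisely what converts the noiseless certificate into an $\ell_1$ lower bound for $\cA$, uniformly in $\vct{x}_0$, for free.

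Your conditioning sketch, by contrast, does not close the balanced case as written. After freezing the $u^\perp$-coordinates you have only the $m$ independent scalars $z_i=\vct{a}_i^*u$ left, and must lower-bound $\sum_i|2\alpha z_i^2 + 2c_i z_i + d_i|$ \emph{uniformly} over all admissible $\mtx{H}$. The matrix $\mtx{H}_{T^\perp}$ enters only through the nonnegative shifts $d_i=\<\vct{a}_i'(\vct{a}_i')^*,\mtx{H}_{T^\perp}\>$, and the set of achievable shift vectors is essentially $m$-dimensional; an $\epsilon$-net of it has cardinality $(C/\epsilon)^{m}$, while the concentration you get for $\sum_i|q_i(z_i)|$ at a fixed $\mtx{H}$ is only $e^{-cm}$ with a \emph{fixed} numerical $c$, so the union bound does not close. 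And any deterministic manipulation in the $d_i$'s via $|r_i+d_i|\ge d_i-|r_i|$ or $\ge |r_i|-d_i$ just returns you to the imbalance hypothesis you were trying to escape. This is a genuine gap; the dual-certificate device is the missing idea that dissolves it without ever having to net over $\mtx{H}_{T^\perp}$.
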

In Section \ref{sec:stability}, we shall explain that these results
are optimal and cannot possibly be improved. For now, we would like to
stress that the bounds \eqref{eq:stability}--\eqref{eq:stability2}
considerably strengthen those found in \cite{PhaseLift2}. To be sure,
this reference shows that if the noise $\vct{w}$ is known to be
bounded, i.e.~$\|\vct{w}\|_2 \le \varepsilon$, then a relaxed version
of \eqref{eq:PL} yields an estimate $\tilde{\mtx{X}}$ obeying
\[
\|\tilde{\mtx{X}} - \vct{x}_0\vct{x}_0^*\|_F^2 \le C_0 \, \varepsilon^2.
\]
In contrast, since $\|\vct{w}\|_1 \le \sqrt{m} \|\vct{w}\|_2 \le
\sqrt{m} \varepsilon$, the new Theorem \ref{teo:stability} gives
\[
\|\tilde{\mtx{X}} - \vct{x}_0\vct{x}_0^*\|_F^2 \le C_0 \,
\frac{\varepsilon^2}{m};
\]
this represents a substantial improvement.

\section{Proofs}
\label{sec:proofs}

We prove Theorems \ref{teo:main} and \ref{teo:stability} in the
real-valued case, the complex case being similar, see
\cite{PhaseLift2} for details. Next, the Gaussian and uniform models
are nearly equivalent: indeed, suppose $\vct{a}_i$ is uniformly
sampled on the sphere; if $n \rho_i^2 \sim \chi^2_n$ and is
independent of $\vct{a}_i$, then $\vct{z}_i = \rho_i \vct{a}_i$ is
normally distributed. Hence,
\[
b_i = |\<\vct{a}_i, \vct{x}_0\>|^2 + w_i \quad \Longleftrightarrow \quad
b'_i = |\<\vct{z}_i, \vct{x}_0\>|^2 + \rho_i^2 w_i.
\]
In the noiseless case, we have full equivalence. In the noisy case, we
can transfer a bound for Gaussian measurements into the same bound for
uniform measurements by changing the probability of success ever so
slightly---as noted in Theorem \ref{teo:stability}. Thus, it suffices
to study the real-valued Gaussian case.

We introduce some notation and with $[m] = \{1, \ldots, m\}$, we let
$\cA: \R^{n \times n} \goto \R^m$ be the linear map $\cA(\mtx{X}) =
\{\tr(\vct{a}_i \vct{a}_i^* \mtx{X})\}_{i \in [m]}$ whose adjoint is
given by $\cA^*(\vct{y}) = \sum_{i \in [m]} y_i \vct{a}_i
\vct{a}_i^*$. Note that vectors and matrices are boldfaced whereas
scalars are not. In the sequel, we let $T$ be the subspace of
symmetric matrices of the form $\{\mtx{X} = \vct{x} \vct{x}_0^* +
\vct{x}_0 \vct{x}^* : \vct{x} \in \R^n\}$ and $T^\perp$ be its
orthogonal complement. For a symmetric matrix $\mtx{X}$, we put
$\mtx{X}_T$ for the orthogonal projection of $\mtx{X}$ onto $T$ and
likewise for $\mtx{X}_{T^\perp}$. Hence, $\mtx{X} = \mtx{X}_T +
\mtx{X}_{T^\perp}$. Finally, $\|\vct{y}\|_p$ is the $\ell_p$ norm of a
vector $\vct{y}$ and $\|\mtx{X}\|$ (resp.~$\|\mtx{X}\|_F$) is the
spectral (resp.~Frobenius) norm of a matrix $\mtx{X}$.

\subsection{Dual certificates} 

We begin by specializing Lemmas 3.1 and 3.2 from \cite{PhaseLift2}.
\begin{lemma}[\cite{PhaseLift2}]
\label{lem:PL}
There is an event $E$ of probability at least $1 - 5e^{-\gamma_0 m}$
such that on $E$, any positive symmetric matrix obeys 
\[
m^{-1} \|\cA(\mtx{X})\|_1 \le (1 + 1/8) \tr(\mtx{X}),  
\]
and any symmetric rank-2 matrix obeys
\[
m^{-1} \|\cA(\mtx{X})\|_1 \ge 0.94 (1 - 1/8) \|\mtx{X}\|.
\]
\end{lemma}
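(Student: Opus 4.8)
This lemma reproduces, with only cosmetic changes, the arguments for Lemmas~3.1 and~3.2 of \cite{PhaseLift2}, so I only outline it. By the reduction already made we may take $\vct a_1,\dots,\vct a_m$ i.i.d.\ $\cN(\vct 0,\mtx I_n)$, and since both inequalities are invariant under $\mtx X\mapsto t\mtx X$ with $t>0$ no generality is lost by normalizing. For the \emph{upper bound}, note that when $\mtx X\succeq\mtx 0$ every entry $\vct a_i^\transp\mtx X\vct a_i$ of $\cA(\mtx X)$ is nonnegative, so
\[
m^{-1}\|\cA(\mtx X)\|_1=\Bigl\langle\mtx X,\;m^{-1}\!\sum_{i\in[m]}\vct a_i\vct a_i^\transp\Bigr\rangle\le\lambda_{\max}\!\Bigl(m^{-1}\!\sum_{i\in[m]}\vct a_i\vct a_i^\transp\Bigr)\tr(\mtx X),
\]
using $\langle\mtx X,\mtx M\rangle=\tr(\mtx X^{1/2}\mtx M\mtx X^{1/2})\le\lambda_{\max}(\mtx M)\tr(\mtx X)$ for $\mtx X\succeq\mtx 0$. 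So it suffices to show $\lambda_{\max}\bigl(m^{-1}\sum_i\vct a_i\vct a_i^\transp\bigr)\le 1+1/8$ on an event of probability $1-e^{-\gamma_0 m}$, which is the standard non-asymptotic bound on the largest eigenvalue of a Gaussian sample covariance matrix: an $\epsilon$-net over $S^{n-1}$ (cardinality $e^{O(n)}$) together with Bernstein's inequality for the sub-exponential sums $\sum_i(\langle\vct a_i,\vct u\rangle^2-1)$ delivers it once $m\ge c_0 n$ with $c_0$ large.

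For the \emph{lower bound}, let $\mtx X$ be symmetric of rank at most $2$, write its spectral decomposition $\mtx X=\lambda_1\vct u_1\vct u_1^\transp+\lambda_2\vct u_2\vct u_2^\transp$ with orthonormal $\vct u_1,\vct u_2$, and normalize so $\|\mtx X\|=\max(|\lambda_1|,|\lambda_2|)=1$. With $\xi_i=\langle\vct a_i,\vct u_1\rangle$, $\eta_i=\langle\vct a_i,\vct u_2\rangle$, the pairs $(\xi_i,\eta_i)$ are i.i.d.\ $\cN(\vct 0,\mtx I_2)$ and $m^{-1}\|\cA(\mtx X)\|_1=m^{-1}\sum_{i\in[m]}\bigl|\lambda_1\xi_i^2+\lambda_2\eta_i^2\bigr|$. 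An elementary two-dimensional Gaussian integral shows that, uniformly over $\max(|\lambda_1|,|\lambda_2|)=1$, $\E\bigl|\lambda_1\xi^2+\lambda_2\eta^2\bigr|\ge\kappa$ for a fixed numerical constant $\kappa$; since each summand is sub-exponential with an $O(1)$ parameter, Bernstein's inequality gives, for fixed $\mtx X$, $m^{-1}\sum_i|\lambda_1\xi_i^2+\lambda_2\eta_i^2|\ge\kappa-t$ with probability $1-e^{-cm\min(t,t^2)}$.

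To make this hold simultaneously over all rank-$\le 2$ symmetric matrices with $\|\mtx X\|=1$ --- a set carved out by $O(n)$ real parameters --- cover it by an $\epsilon$-net $\cN_\epsilon$ of cardinality $e^{O(n\log(1/\epsilon))}$, union-bound the pointwise estimate over $\cN_\epsilon$, and pass to an arbitrary $\mtx X$ via the Lipschitz bound
\[
m^{-1}\bigl|\|\cA(\mtx X)\|_1-\|\cA(\mtx X')\|_1\bigr|\le m^{-1}\|\cA(\mtx X-\mtx X')\|_1\le(1+1/8)\,\|\mtx X-\mtx X'\|_\ast\le 2(1+1/8)\,\|\mtx X-\mtx X'\|_F,
\]
where $\|\cdot\|_\ast$ is the nuclear norm, the middle step splits $\mtx X-\mtx X'$ into its positive and negative parts and applies the \emph{upper bound} to each PSD piece, and the last step uses $\rank(\mtx X-\mtx X')\le 4$. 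Taking $\epsilon$ a small absolute constant, $|\cN_\epsilon|=e^{O(n)}$, so when $c_0$ is large the deviation exponent $cm\min(t,t^2)$ dominates the entropy $O(n)\le(c/2)m$, and a suitable choice of $t,\epsilon$ and the constant $\kappa$ yields $m^{-1}\|\cA(\mtx X)\|_1\ge 0.94(1-1/8)\|\mtx X\|$ for all rank-$\le 2$ matrices on an event of probability $1-5e^{-\gamma_0 m}$.

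The crux is the lower bound. Unlike the upper bound, $m^{-1}\|\cA(\cdot)\|_1$ on rank-$2$ matrices is not a spectral functional of a single random matrix, and the summands $|\lambda_1\xi_i^2+\lambda_2\eta_i^2|$ can be arbitrarily close to $0$ when $\lambda_1,\lambda_2$ have opposite signs, so one genuinely needs the \emph{anti-concentration} estimate $\E|\lambda_1\xi^2+\lambda_2\eta^2|\ge\kappa$ rather than mere concentration of a sum, and this lower bound must then be propagated uniformly over a continuum of matrices. It is exactly the balance between the sub-exponential deviation exponent $\sim m$ and the $\sim n$-dimensional net entropy that forces $m$ to exceed a large absolute constant times $n$; pinning down the numerical value $0.94$ amounts to carrying constants from the Gaussian integral through the Bernstein step and the net discretization with no slack to spare.
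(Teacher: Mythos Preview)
The paper does not prove this lemma at all: it is stated with attribution to \cite{PhaseLift2} (Lemmas~3.1 and~3.2 there) and used as a black box. Your outline is a faithful and correct sketch of the argument in that reference---the upper bound via the top eigenvalue of the empirical covariance, and the lower bound via an anti-concentration estimate for $\E|\lambda_1\xi^2+\lambda_2\eta^2|$, pointwise Bernstein, and an $\epsilon$-net over rank-$2$ matrices, with the Lipschitz step handled by feeding the positive and negative parts of $\mtx X-\mtx X'$ back into the already-established PSD upper bound. So there is nothing to compare: you have supplied what the paper merely cites.
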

The following intermediate result is novel, although we became aware
of a similar argument in \cite{DemanetHand} as we finished this paper. 
\begin{lemma}
\label{lem:inexact}
Suppose there is a matrix $\mtx{Y}$ in the range of $\cA^*$ obeying
$\mtx{Y}_{T^\perp}\preceq -\mtx{I}_{T^\perp}$ and $\|\mtx{Y}_T\|_F\leq
{1 \over {2}}$. Then on the event $E$ from Lemma \ref{lem:PL},
$\mtx{X}_0 = \vct{x}_0 \vct{x}_0^*$ is PhaseLift's unique feasible
point. 
\end{lemma}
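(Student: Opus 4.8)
The plan is to run the usual dual-certificate argument, but to push it one step further so as to establish the (as it turns out, literally correct) assertion that $\mtx{X}_0$ is the \emph{only} matrix meeting PhaseLift's constraints; since PhaseLift minimizes over that constraint set, it then returns $\mtx{X}_0$ on the event $E$. Let $\mtx{X} = \mtx{X}_0 + \mtx{H}$ be an arbitrary feasible point, so $\mtx{X} \succeq 0$ and, recalling that $b_i = |\langle \vct{a}_i, \vct{x}_0\rangle|^2$ gives $b = \cA(\mtx{X}_0)$, also $\cA(\mtx{H}) = 0$. First I would record two structural facts about $\mtx{H}$. Since $T^\perp$ is precisely the space of symmetric matrices annihilating $\vct{x}_0$, the component $\mtx{H}_{T^\perp}$ is the compression of $\mtx{H}$ to $\vct{x}_0^\perp$; because $\mtx{X}_0$ compresses to $0$ there and $\mtx{X} \succeq 0$, this forces $\mtx{H}_{T^\perp} \succeq 0$. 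Also $\mtx{H}_T = \vct{h}\vct{x}_0^* + \vct{x}_0\vct{h}^*$ for some $\vct{h} \in \R^n$, hence $\rank(\mtx{H}_T) \le 2$ and $\|\mtx{H}_T\|_F \le \sqrt{2}\,\|\mtx{H}_T\|$.

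Next I would pair $\mtx{H}$ with the certificate. Writing $\mtx{Y} = \cA^*(\vct{y})$ and using $\cA(\mtx{H}) = 0$, we get $\langle \mtx{Y}, \mtx{H}\rangle = \langle \vct{y}, \cA(\mtx{H})\rangle = 0$, so by orthogonality of $T$ and $T^\perp$,
\[
\langle \mtx{Y}_T, \mtx{H}_T\rangle + \langle \mtx{Y}_{T^\perp}, \mtx{H}_{T^\perp}\rangle = 0 .
\]
The hypothesis $\mtx{Y}_{T^\perp} \preceq -\mtx{I}_{T^\perp}$ together with $\mtx{H}_{T^\perp} \succeq 0$ yields $\langle \mtx{Y}_{T^\perp}, \mtx{H}_{T^\perp}\rangle \le -\tr(\mtx{H}_{T^\perp})$, while Cauchy--Schwarz and $\|\mtx{Y}_T\|_F \le \tfrac12$ give $\langle \mtx{Y}_T, \mtx{H}_T\rangle \le \tfrac12\|\mtx{H}_T\|_F$. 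Substituting both into the identity above produces the key bound $\tr(\mtx{H}_{T^\perp}) \le \tfrac12\|\mtx{H}_T\|_F$.

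The last step invokes Lemma \ref{lem:PL}. From $\cA(\mtx{H}) = 0$ we have $\cA(\mtx{H}_T) = -\cA(\mtx{H}_{T^\perp})$, hence $\|\cA(\mtx{H}_T)\|_1 = \|\cA(\mtx{H}_{T^\perp})\|_1$. Applying the second inequality of Lemma \ref{lem:PL} to the rank-$2$ matrix $\mtx{H}_T$, the first inequality to the PSD matrix $\mtx{H}_{T^\perp}$, and then the two estimates from the previous paragraph, one is led to
\[
0.94\,(1 - \tfrac18)\,\|\mtx{H}_T\| \;\le\; (1 + \tfrac18)\,\tr(\mtx{H}_{T^\perp}) \;\le\; (1 + \tfrac18)\,\tfrac12\,\|\mtx{H}_T\|_F \;\le\; (1 + \tfrac18)\,\tfrac{\sqrt{2}}{2}\,\|\mtx{H}_T\| .
\]
Since $0.94\,(7/8) > (9/8)(\sqrt{2}/2)$, this forces $\|\mtx{H}_T\| = 0$, so $\mtx{H}_T = 0$; then $\tr(\mtx{H}_{T^\perp}) \le 0$ with $\mtx{H}_{T^\perp} \succeq 0$ forces $\mtx{H}_{T^\perp} = 0$, whence $\mtx{H} = 0$ and $\mtx{X} = \mtx{X}_0$.

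The one delicate point — really the only place the argument is not purely mechanical — is the numerical inequality $0.94\,(7/8) > (9/8)(\sqrt{2}/2)$ closing the last display: the $0.94$ factor, and the factor-$\sqrt{2}$ loss incurred in passing from $\|\cdot\|_F$ to $\|\cdot\|$ on rank-$2$ matrices, must jointly beat the $1\pm 1/8$ slack coming from Lemma \ref{lem:PL}, and they do so only barely. Everything else (the compression argument for $\mtx{H}_{T^\perp} \succeq 0$, the $T/T^\perp$ decomposition, and the pairing with the certificate) is standard; the only thing to watch is that the rank-$2$ structure of $T$ gets used twice, once to produce the $\sqrt{2}$ and once to license the lower bound in Lemma \ref{lem:PL}.
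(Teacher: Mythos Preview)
Your argument is correct and follows essentially the same route as the paper's proof: the same decomposition $\mtx{H}=\mtx{H}_T+\mtx{H}_{T^\perp}$, the same pairing $\langle\mtx{Y},\mtx{H}\rangle=0$ with the certificate, and the same use of the two bounds in Lemma~\ref{lem:PL} via $\cA(\mtx{H}_T)=-\cA(\mtx{H}_{T^\perp})$. The only cosmetic difference is the order in which the inequalities are chained and the form of the closing numerical check---your $0.94\cdot\tfrac{7}{8}>\tfrac{9}{8}\cdot\tfrac{\sqrt{2}}{2}$ is algebraically equivalent to the paper's $0.73/\sqrt{2}>1/2$.
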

\begin{proof}
  Suppose $\vct{x}_0 \vct{x}_0^* + \mtx{H}$ is feasible, which implies
  that (1) $\mtx{H}_{T^\perp} \succeq \mtx{0}$ and (2) $\mtx{H}$ is in
  the null space of $\cA$ so that $\<\mtx{H}, \mtx{Y}\> = 0 =
  \<\mtx{H}_T, \mtx{Y}_T\> + \<\mtx{H}_{T^\perp},
  \mtx{Y}_{T^\perp}\>$. On the one hand,
\[
\langle \mtx{H}_T, \mtx{Y}_T \rangle=-\langle \mtx{H}_{T^\perp},
\mtx{Y}_{T^\perp}\rangle\geq \langle \mtx{H}_{T^\perp},
\mtx{I}_{T^\perp}\rangle= \tr(\mtx{H}_{T^\perp}).
\]
Lemma \ref{lem:PL} asserts that $m^{-1} \|\cA(\mtx{H}_{T^\perp})\|_1
\le (1 + 1/8) \tr(\mtx{H}_{T^\perp})$ and $m^{-1}
\|\cA(\mtx{H}_{T})\|_1 \ge 0.94 (1 - 1/8) \|\mtx{H}_T\|$. Since
$\cA(\mtx{H}_{T}) = -\cA(\mtx{H}_{T^\perp})$, this gives
\begin{equation}
\label{eq:one}
\tr(\mtx{H}_{T^\perp})\geq {1\over
  {(1+1/8)m}}\|\cA(\mtx{H}_{T^\perp})\|_1 \geq 0.73
\|\mtx{H}_T\|\geq{0.73 \over {\sqrt{2}}}\|\mtx{H}_T\|_F, 
\end{equation}
where the last inequality is a consequence of the fact that $\mtx{H}_T$ has
rank at most 2.  On the other hand,
\begin{equation}
\label{eq:two}
|\langle \mtx{H}_T, \mtx{Y}_T \rangle|\leq
\|\mtx{H}_T\|_F\|\mtx{Y}_T\|_F\leq \frac12 \|\mtx{H}_{T}\|_F.
\end{equation}
Since $0.73/\sqrt{2} > 1/2$, \eqref{eq:one} and \eqref{eq:two} give
that $\mtx{H}_T = 0$, which in turns implies that
$\mtx{H}_{T^\perp}=\mtx{0}$. This completes the proof.
\end{proof}

To prove Theorem \ref{teo:main}, it remains to construct a matrix
$\mtx{Y}$ obeying the conditions of Lemma \ref{lem:inexact} for all
$\vct{x}_0 \in \R^n$. We proceed in two steps: we first show that for
a fixed $\vct{x}_0$, one can find $\mtx{Y}$ with high probability, and
then use this property to show that one can find $\mtx{Y}$ for all
$\vct{x}_0$.

\begin{lemma}
\label{lem:core}
Fix $\vct{x}_0 \in \R^n$. Then with probability at least $1 -
O(e^{-\gamma m})$, there exists $\mtx{Y}$ obeying $\|\mtx{Y}_{T^\perp} +{{17}\over {10}}\mtx{I}_{T^\perp}\|\leq{1\over {10}}$ and $\|\mtx{Y}_T\|_F\leq
{3 \over {20}}$. In addition, one can take $\mtx{Y} =
\cA^*(\vct{\lambda})$ with $\|\vct{\lambda}\|_\infty \le 7/m$.
\end{lemma}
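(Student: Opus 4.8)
The plan is to construct $\mtx{Y}$ directly as $\mtx{Y} = \cA^*(\vct{\lambda})$ for a carefully chosen $\vct{\lambda}$, built so that its expectation already does the right thing and then controlling the fluctuation. Without loss of generality normalize $\|\vct{x}_0\|_2 = 1$. A natural Ansatz is to take $\lambda_i$ to be a function of $\<\vct{a}_i,\vct{x}_0\>^2$ alone, say $\lambda_i = \frac1m\,f(\<\vct{a}_i,\vct{x}_0\>^2)$ for a bounded scalar function $f$; then $\mtx{Y}$ is an average of independent rank-one matrices $\vct{a}_i\vct{a}_i^*$ weighted by these scalars, and $\E[\mtx{Y}] = \E[f(g^2)\,\vct{a}\vct{a}^*]$ where $g = \<\vct{a},\vct{x}_0\>$. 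Because the Gaussian law is rotationally invariant about the direction $\vct{x}_0$, this expectation is automatically of the form $\alpha\,\vct{x}_0\vct{x}_0^* + \beta\,(\mtx{I} - \vct{x}_0\vct{x}_0^*)$, i.e.\ it is diagonal in the $T$/$T^\perp$ split: its $T^\perp$-part is a multiple of $\mtx{I}_{T^\perp}$ and its $T$-part is a multiple of $\vct{x}_0\vct{x}_0^* \in T$. The constants $\alpha,\beta$ are one-dimensional Gaussian integrals involving $f$, and one chooses $f$ (bounded, so that $\|\vct{\lambda}\|_\infty \le 7/m$ holds deterministically) so that $\beta = -17/10$ exactly, which makes $\E[\mtx{Y}]_{T^\perp} = -\tfrac{17}{10}\mtx{I}_{T^\perp}$; the residual $\E[\mtx{Y}]_T$ is a fixed multiple of $\vct{x}_0\vct{x}_0^*$, which we will either absorb or, more cleanly, kill by adding one extra degree of freedom (e.g.\ also subtracting a small multiple of the term $\<\vct{a}_i,\vct{x}_0\>^2$ so that $\alpha = 0$ as well, which is consistent with keeping $f$ bounded). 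The point of the precise numbers $17/10$, $1/10$, $3/20$ in the statement is exactly to leave room for the concentration error.

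Second, I would show $\mtx{Y}$ concentrates about its mean. Since $\mtx{Y} - \E\mtx{Y} = \frac1m\sum_i \big(\lambda_i' \vct{a}_i\vct{a}_i^* - \E[\lambda_i'\vct{a}\vct{a}^*]\big)$ is a sum of $m$ i.i.d.\ centered random matrices, I would use a matrix Bernstein / truncation argument. The summands are not bounded (the $\vct{a}_i\vct{a}_i^*$ have heavy-ish tails), but $\|\vct{a}_i\vct{a}_i^*\| = \|\vct{a}_i\|_2^2$ concentrates sharply around $n$, and $\lambda_i'$ is bounded by $O(1/m)$; truncating on the event $\{\|\vct{a}_i\|_2^2 \le C n\}$ (which fails with probability $e^{-cn} \le e^{-cm/c_0}$) reduces to bounded summands of norm $O(n/m)$ with matrix variance $O(n/m)$, so matrix Bernstein gives
\[
\P\big(\|\mtx{Y} - \E\mtx{Y}\| \ge t\big) \le 2n\,\exp\!\Big(-c\,\frac{m\,t^2}{n}\Big)
\]
for $t$ a small constant; taking $m \ge c_0 n$ with $c_0$ large enough makes the right-hand side $O(e^{-\gamma m})$ for any fixed small $t$, say $t = 1/20$. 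Then $\|\mtx{Y}_{T^\perp} + \tfrac{17}{10}\mtx{I}_{T^\perp}\| \le \|\mtx{Y} - \E\mtx{Y}\| \le 1/20 < 1/10$ and $\|\mtx{Y}_T\|_F \le \|\mtx{Y}_T - \E[\mtx{Y}]_T\|_F + \|\E[\mtx{Y}]_T\|_F$; the first term is at most $\sqrt{\mathrm{rank}}\cdot\|\mtx{Y} - \E\mtx{Y}\| \le \sqrt{2}/20$ since $T$-matrices have rank $\le 2$ (this is where I use that $\mtx{Y}_T$ lives in a low-rank subspace, exactly as in Lemma \ref{lem:inexact}), and the second term is $0$ if we arranged $\alpha = 0$, or is a controllable constant otherwise. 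Bookkeeping the constants gives the stated $3/20$.

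The main obstacle is getting the concentration to hold with probability $1 - O(e^{-\gamma m})$ rather than the weaker $1 - O(e^{-\gamma m/n})$ that a naive application of the scalar/vector Bernstein inequality to $\|\mtx{Y}\|$ via an $\varepsilon$-net would give — this is precisely the improvement over \cite{PhaseLift2} flagged in the introduction. Two ingredients make it work: (i) the deviation we must control is of a \emph{fixed} constant size (we only need $\|\mtx{Y}-\E\mtx{Y}\|$ below an absolute constant, not below $1/\mathrm{poly}(n)$), so the $e^{-c m t^2/n}$ rate becomes $e^{-c' m/n}\cdot$ — wait, that is still only $e^{-cm/n}$; the genuine fix is (ii) to not use a union bound over a net at full strength but instead to exploit that the bad event for the operator norm can be reduced to a \emph{fixed} direction after conditioning, or alternatively to replace matrix Bernstein's $2n$ prefactor (harmless: $2n \cdot e^{-cm/n}$ is \emph{not} small) by a chaining/Slepian-type bound that yields $e^{-cm}$ for constant-level deviations because the relevant Gaussian chaos has sub-exponential increments in $m$. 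Concretely I would bound $\|\mtx{Y}-\E\mtx{Y}\| = \sup_{\vct{u}\in\mathcal{N}} \vct{u}^*(\mtx{Y}-\E\mtx{Y})\vct{u}$ over a $1/4$-net $\mathcal{N}$ of the sphere; for each fixed $\vct{u}$, $\vct{u}^*(\mtx{Y}-\E\mtx{Y})\vct{u} = \frac1m\sum_i\big(\lambda_i' \<\vct{a}_i,\vct{u}\>^2 - \E\big)$ is an average of i.i.d.\ sub-exponential scalars with $O(1/m)$ magnitude, so Bernstein gives $\exp(-c m t^2)$ for constant $t$ — the key being that here the ``variance proxy'' is $O(1/m)$ with \emph{no} extra factor of $n$, since $\lambda_i' = O(1/m)$ already absorbs the scaling and $\<\vct{a}_i,\vct{u}\>^2 = O(1)$ in expectation for a \emph{unit} $\vct{u}$. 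Then $|\mathcal{N}| \le 9^n$, and $9^n e^{-cmt^2} = O(e^{-\gamma m})$ as soon as $m \ge c_0 n$ with $c_0$ large, closing the argument.
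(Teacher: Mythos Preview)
Your overall strategy is correct and matches the paper's: take $\lambda_i = \tfrac{1}{m}f(\<\vct{a}_i,\vct{x}_0\>^2)$ with $f$ bounded, so that $\E\mtx{Y} = \alpha\,\vct{x}_0\vct{x}_0^T + \beta\,\mtx{I}_{T^\perp}$, and choose $f$ to hit the right $(\alpha,\beta)$. The paper makes this explicit by taking
\[
f(s) = s\,\mathbbm{1}(s\le 9) - \beta_0,\qquad \beta_0 = \E[z^4\mathbbm{1}(|z|\le 3)]\approx 2.673,
\]
which gives $\alpha = \E[z^4\mathbbm{1}(|z|\le 3)] - \beta_0 = 0$ exactly and $\beta = \E[z^2\mathbbm{1}(|z|\le 3)] - \beta_0 \approx -1.70$, with $\|f\|_\infty < 7$. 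So your ``add a degree of freedom to kill $\alpha$'' is precisely what the paper does.

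Where you differ is in the concentration step. Your final route---an $\varepsilon$-net on the sphere plus scalar sub-exponential Bernstein for each fixed $\vct{u}$, then a union bound over $9^n$ points---is correct and gives the $e^{-\gamma m}$ rate once $m\ge c_0 n$. The paper instead exploits the specific truncation more structurally: it writes $\mtx{Y}=\mtx{Y}^{(0)}-\mtx{Y}^{(1)}$ with $\mtx{Y}^{(1)}$ a scaled Wishart (handled by standard Wishart concentration), and observes that on $T^\perp$ the truncated piece $\mtx{Y}^{(0)}_{T^\perp}=\tfrac{1}{m}\sum \vct{\xi}_i\vct{\xi}_i^T$ is a sample covariance of the \emph{sub-Gaussian} vectors $\vct{\xi}_i=\<\vct{a}_i,\vct{x}_0\>\mathbbm{1}(|\<\vct{a}_i,\vct{x}_0\>|\le 3)\,\vct{a}_i'$ (bounded scalar times Gaussian), so Vershynin's Theorem~5.39 applies directly. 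For $\mtx{Y}_T$ the paper does not bound $\|\mtx{Y}-\E\mtx{Y}\|$ at all; it uses the independence of $\<\vct{a}_i,\vct{x}_0\>$ and $\vct{a}_i'$ to write $\mtx{Y}\vct{x}_0$ explicitly and control $|\<\mtx{Y}\vct{x}_0,\vct{x}_0\>|$ and $\|(\mtx{Y}\vct{x}_0)'\|_2$ by one-dimensional Bernstein and a $\chi^2$ bound, respectively. The paper's version is more modular (it cites black-box random-matrix results rather than redoing the net argument); yours is more self-contained and arguably simpler once you have committed to the net.

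Two small clean-ups in your write-up: the matrix-Bernstein detour is a red herring and should be cut; and the bound $\|\mtx{M}_T\|_F\le\sqrt{2}\,\|\mtx{M}\|$ for symmetric $\mtx{M}$ is off by a factor of $2$ (since $\|\mtx{M}_T\|\le\|\mtx{M}\|+\|\mtx{M}_{T^\perp}\|\le 2\|\mtx{M}\|$), which is harmless for the constants but worth stating correctly.
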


\begin{proof}
  We assume that $\|\vct{x}_0\|_2 = 1$ without loss of generality.
  Our strategy is to show that
\begin{equation}
\label{eq:Y}
\mtx{Y} = \sum_{i \in [m]} \lambda_i  \vct{a}_i\vct{a}_i^T :=
\frac{1}{m} \sum_{i \in [m]} [|\< \vct{a}_i, \vct{x}_0\>|^2 \mathbbm{1}(|\< \vct{a}_i, \vct{x}_0\>|\leq 3) - \beta_0]  \, \vct{a}_i\vct{a}_i^T := \mtx{Y}^{(0)} - \mtx{Y}^{(1)},  
\end{equation}
where $\beta_0 = \E z^4 \mathbbm{1}(|z| \le 3) \approx 2.6728$ with $z
\sim \mathcal{N}(0,1)$, is a valid certificate. As claimed,
$\|\vct{\lambda}\|_\infty \le 7/m$.

We begin by checking the condition $\mtx{Y}_{T^\perp}\preceq
-\mtx{I}_{T^\perp}$. First, the matrix $\mtx{Y}^{(1)}$ is Wishart and
standard results in random matrix theory---e.g.~Corollary 5.35 in
\cite{VershyninRMT}---assert that
\begin{equation*}
\|\mtx{Y}^{(1)} - \E \mtx{Y}^{(1)} \| = \|\mtx{Y}^{(1)} - \beta_0 \mtx{I}\| \le \beta_0/40
\end{equation*}
with probability at least $1 - 2 e^{-\gamma m}$ provided that $m \ge c n$, where $c$ is sufficiently large. In particular, we have
\begin{equation}
\label{eq:YoneTp}
\|\mtx{Y}_{T^\perp}^{(1)} - \beta_0 \mtx{I}_{T^\perp}\| \le \beta_0/40.
\end{equation}
Second, letting $\vct{x}'$ be the projection of $\vct{x}$ onto the
orthogonal complement of $\lspan(\vct{x}_0)$, we have
\[
\mtx{Y}^{(0)}_{T^\perp} = \frac{1}{m} \sum_{i \in [m]} \vct{\xi}_i
\vct{\xi}_i^T, \quad \vct{\xi}_i = \< \vct{a}_i, \vct{x}_0\>  \mathbbm{1}(|\< \vct{a}_i, \vct{x}_0\>|\leq 3)
\, \vct{a}'_i.
\]
It is immediate to check that the $\vct{\xi}_i$'s are iid copies of a
zero-mean, isotropic and sub-Gaussian random vector $\vct{\xi}$. In
particular, with $z \sim \mathcal{N}(0,1)$,
\[
\E \vct{\xi} \vct{\xi}^T = \alpha_0 \, \mtx{I}_{T^\perp}, \quad
\alpha_0 = \E z^2 \mathbbm{1}(|z|\leq 3) \approx 0.9707.
\]
Again, standard results about random matrix with sub-gaussian
rows---e.g.~Theorem 5.39 in \cite{VershyninRMT}---give 
\begin{equation}
\label{eq:YzeroTp}
\|\mtx{Y}_{T^\perp}^{(0)} - \E \mtx{Y}_{T^\perp}^{(0)} \| =
\|\mtx{Y}_{T^\perp}^{(0)} - \alpha_0 \, \mtx{I}_{T^\perp}\| \le
\alpha_0/40
\end{equation}
with probability at least $1 - 2 e^{-\gamma m}$ provided that $m \ge c
n$, where $c$ is sufficiently large. Clearly, \eqref{eq:YoneTp} together with \eqref{eq:YzeroTp} yield the
first condition $\|\mtx{Y}_{T^\perp} +{{17}\over {10}}\mtx{I}_{T^\perp}\|\leq{1\over {10}}$ .

We now establish $\|\mtx{Y}_{T}\|_F \le 3/20$. To begin with, set
$\vct{y} = \mtx{Y} \vct{x}_0$ and observe that since
$\|\mtx{Y}_{T}\|^2_F = |\<\vct{y} , \vct{x}_0\>|^2 +
2\|\vct{y}'\|^2_2$, it suffices to verify that
\[
|\<\vct{y} , \vct{x}_0\>|^2 \le 1/20 \text{ and }  \|\vct{y}'\|^2_2 \le 1/10.
\]
Write $\<\vct{y} , \vct{x}_0\> = \frac{1}{m} \sum_{i \in [m]} \xi_i$,
where $\xi_i$ are iid copies of $\xi = z^4 \mathbbm{1}(|z| \le 3) -
\beta_0 z^2$, $z \sim \mathcal{N}(0,1)$. Note that $\xi$ is a
mean-zero sub-exponential random variable since the first term is
bounded and the second is a squared Gaussian variable. Thus,
Bernstein's inequality---e.g.~Corollary 5.17 in
\cite{VershyninRMT}---gives
\[
\P(|\<\vct{y} , \vct{x}_0\>| \ge 1/\sqrt{20}) \le 2 \exp(-\gamma m)
\]
for some numerical constant $\gamma$. Finally, write $\vct{y}'$ as 
\[
\vct{y}' = \frac{1}{m} \mtx{Z'} \vct{c}, \quad \mtx{Z}' = [\vct{a}'_1,
\ldots, \vct{a}'_m], \quad c_i = \<\vct{a}_i , \vct{x}_0\>^3
\mathbbm{1}(|\<\vct{a}_i, \vct{x}_0\>| \le 3) - \beta_0 \<\vct{a}_i,
\vct{x}_0\>, \, i \in [m].
\]
Note that $\mtx{Z}'$ and $\vct{c}$ are independent. On the one hand,
the $c_i^2$'s are iid sub-exponential variables and Corollary 5.17 in
\cite{VershyninRMT}---gives
\[
\P(\|\vct{c}\|_2^2 - \E \|\vct{c}\|_2^2 \ge m) \le 2 e^{-\gamma m} 
\]
for some numerical constant $\gamma > 0$. This shows that 
\begin{equation}
\label{eq:bernstein}
\|\vct{c}\|_2^2 \le (\delta_0 + 1)m, \quad 
\delta_0 = \E (z^3 \mathbbm{1}(|z| \le 3)
- \beta_0 z)^2 \approx 4.0663, \, z \sim \mathcal{N}(0,1), 
\end{equation}
with probability at least $1 - 2 e^{-\gamma m}$.  On the other hand,
for a fixed $\vct{x}$ obeying $\|\vct{x}\|_2 = 1$, $\|\mtx{Z'}
\vct{x}\|_2^2$ is distributed as a $\chi^2$-random variable with $n-1$
degrees of freedom and it follows that
\begin{equation}
  \label{eq:chi-sq}
  \P(\|\mtx{Z'} \vct{x}\|_2^2 \ge m/52) \le e^{-\gamma m}
\end{equation}
for some numerical constant $\gamma > 0$ with the proviso that $m \ge
cn$ and $c$ is sufficiently large. We omit the details. To conclude, 
\eqref{eq:bernstein} and \eqref{eq:chi-sq} give that with probability
at least $1 - 3e^{-\gamma m}$, 
\[
\|\vct{y}'\|_2^2 = \frac{1}{m^2} \|\mtx{Z'} \vct{c}\|^2_2 \le 
(1+\delta_0)/52 < 1/10.
\]
This concludes the proof. 
\end{proof}

\subsection{Proof of Theorem \ref{teo:main}} 

The proof of Theorem \ref{teo:main} is now a consequence of the
corollary below.

\begin{corollary}
\label{cor:core}
With probability at least $1 - O(e^{-\gamma m})$, for all $\vct{x}_0
\in \R^n$, there exists $\mtx{Y}$ obeying the conditions of Lemma
\ref{lem:inexact}. In addition, one can take $\mtx{Y} =
\cA^*(\vct{\lambda})$ with $\|\vct{\lambda}\|_\infty \le 7/m$.
\end{corollary}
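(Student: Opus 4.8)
The plan is a covering argument over the sphere. Since $T=T(\vct{x}_0)$ and every constraint in Lemma~\ref{lem:inexact} depends on $\vct{x}_0$ only through its direction, it suffices to produce, simultaneously for all unit $\vct{x}_0$, a matrix $\mtx{Y}=\cA^*(\vct{\lambda})$ with $\|\vct{\lambda}\|_\infty\le 7/m$, $\mtx{Y}_{T^\perp}\preceq-\mtx{I}_{T^\perp}$, and $\|\mtx{Y}_T\|_F\le\tfrac12$ (the case $\vct{x}_0=\vct{0}$ being trivial). Fix a small absolute constant $\delta>0$ and a $\delta$-net $\mathcal{N}\subset S^{n-1}$ with $|\mathcal{N}|\le(1+2/\delta)^n$. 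Applying Lemma~\ref{lem:core} at each $\vct{x}_\nu\in\mathcal{N}$ and union bounding, on an event of probability at least $1-|\mathcal{N}|\,O(e^{-\gamma m})$ — which is $1-O(e^{-\gamma' m})$ because $m\ge c_0 n$ with $c_0$ large depending only on $\delta$ — every net point $\vct{x}_\nu$ carries $\mtx{Y}_\nu=\cA^*(\vct{\lambda}_\nu)$ with $\|\mtx{Y}_{\nu,T_\nu^\perp}+\tfrac{17}{10}\mtx{I}_{T_\nu^\perp}\|\le\tfrac1{10}$, $\|\mtx{Y}_{\nu,T_\nu}\|_F\le\tfrac{3}{20}$, $\|\vct{\lambda}_\nu\|_\infty\le 7/m$, where $T_\nu=T(\vct{x}_\nu)$; in particular $\|\mtx{Y}_\nu\|<2$. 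Intersect this event with $E$ from Lemma~\ref{lem:PL}.

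Then I would show that a single net-point certificate serves its whole cell: given unit $\vct{x}_0$ and $\vct{x}_\nu\in\mathcal{N}$ with $\vct{e}:=\vct{x}_0-\vct{x}_\nu$, $\|\vct{e}\|\le\delta$, the matrix $\mtx{Y}_\nu$ obeys the hypotheses of Lemma~\ref{lem:inexact} for $\vct{x}_0$. For $\mtx{Y}_{T^\perp}\preceq-\mtx{I}_{T^\perp}$, i.e. $\vct{v}^\transp\mtx{Y}_\nu\vct{v}\le-\|\vct{v}\|^2$ for all $\vct{v}\perp\vct{x}_0$: such $\vct{v}$ has $|\<\vct{v},\vct{x}_\nu\>|=|\<\vct{v},\vct{e}\>|\le\delta\|\vct{v}\|$, so decomposing $\vct{v}$ along and orthogonal to $\vct{x}_\nu$ and using $\vct{w}^\transp\mtx{Y}_\nu\vct{w}\le-\tfrac{16}{10}\|\vct{w}\|^2$ for $\vct{w}\perp\vct{x}_\nu$ together with $\|\mtx{Y}_\nu\|<2$ gives $\vct{v}^\transp\mtx{Y}_\nu\vct{v}\le(-\tfrac{16}{10}(1-\delta^2)+4\delta+2\delta^2)\|\vct{v}\|^2$, which is $\le-\|\vct{v}\|^2$ for $\delta$ small. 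For $\|\mtx{Y}_T\|_F\le\tfrac12$, write $\mtx{Y}_\nu=\mtx{Y}_{\nu,T_\nu}-\tfrac{17}{10}\mtx{I}_{T_\nu^\perp}+\mtx{E}$ with $\|\mtx{E}\|\le\tfrac1{10}$ and $\mtx{E}$ supported on $T_\nu^\perp$, and use the elementary bound $\|\mtx{M}_{T(\vct{x}_0)}\|_F\le\sqrt2\,\|\mtx{M}\vct{x}_0\|$ for symmetric $\mtx{M}$ (when $\|\vct{x}_0\|=1$): since $\mtx{I}_{T_\nu^\perp}\vct{x}_0=\vct{x}_0-\<\vct{x}_\nu,\vct{x}_0\>\vct{x}_\nu$ has norm $\le\delta$, and $\mtx{E}\vct{x}_0=-\mtx{E}\vct{e}$ has norm $\le\delta/10$ (because $\mtx{E}$ is supported on $T_\nu^\perp$ and so annihilates $\vct{x}_\nu$), one gets
\[
\|\mtx{Y}_T\|_F\le\|\mtx{Y}_{\nu,T_\nu}\|_F+\tfrac{17}{10}\sqrt2\,\delta+\tfrac{\sqrt2}{10}\,\delta\le\tfrac{3}{20}+\tfrac{17}{10}\sqrt2\,\delta+\tfrac{\sqrt2}{10}\,\delta<\tfrac12
\]
for $\delta$ a small enough absolute constant. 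As $\mtx{Y}_\nu=\cA^*(\vct{\lambda}_\nu)$ with $\|\vct{\lambda}_\nu\|_\infty\le 7/m$, this proves the corollary with the stated probability, and Lemma~\ref{lem:inexact} then gives Theorem~\ref{teo:main}.

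The only delicate point is this transfer step. The naive concern is that $\mtx{Y}_\nu$ is built from $\vct{x}_\nu$ alone and its dominant part, $\approx-\tfrac{17}{10}\mtx{I}_{T_\nu^\perp}$, has Frobenius norm of order $\sqrt n$, so $\mtx{Y}_\nu$ is not a small perturbation of a certificate for $\vct{x}_0$ in any crude norm. The resolution is to use structure, not size: the large part lives on $T_\nu^\perp$ and hence annihilates $\vct{x}_\nu$ (so nearly annihilates $\vct{x}_0$), while $\mtx{M}_{T(\vct{x}_0)}$ is governed entirely by $\mtx{M}\vct{x}_0$ through $\|\mtx{M}_{T(\vct{x}_0)}\|_F\le\sqrt2\,\|\mtx{M}\vct{x}_0\|$. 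This is precisely what the slack between the constants of Lemma~\ref{lem:core} ($\tfrac{17}{10},\tfrac1{10},\tfrac{3}{20}$) and those demanded by Lemma~\ref{lem:inexact} ($1,\tfrac12$) absorbs, and it fixes how small $\delta$ — hence how large $c_0$ — must be.
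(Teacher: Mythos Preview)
Your proposal is correct and follows essentially the same approach as the paper: a covering argument on $S^{n-1}$, a union bound over the net using Lemma~\ref{lem:core}, and then a continuity/perturbation step showing the net-point certificate works throughout its cell thanks to the slack between the constants in Lemma~\ref{lem:core} and Lemma~\ref{lem:inexact}. The only differences are bookkeeping in the perturbation step---the paper writes $\mtx{\Delta}=\vct{x}\vct{x}^T-\vct{x}_0\vct{x}_0^T$ and bounds explicit remainder matrices in operator norm, whereas you argue via quadratic forms for the $T^\perp$ condition and via the identity $\|\mtx{M}_{T(\vct{x}_0)}\|_F\le\sqrt{2}\,\|\mtx{M}\vct{x}_0\|$ for the $T$ condition---and a harmless sign typo ($\mtx{E}\vct{x}_0=\mtx{E}\vct{e}$, not $-\mtx{E}\vct{e}$).
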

The reason why this corollary holds is straightforward: Lemma
\ref{lem:core} holds true for exponentially points and a sort of
continuity argument allows to extend it to all points. Again it
suffices to establish the property for unit-normed vectors.

\begin{proof}
  Let $\cN_\epsilon$ be an $\epsilon$-net for the unit sphere with
  cardinality obeying $|\cN_\epsilon|\leq (1+ 2/\epsilon)^n$ by
  Lemma 2 in \cite{VershyninRMT}.\footnote{For any unit-normed vector
    $\vct{x}$, there is $\vct{x}_0 \in \cN_\epsilon$ with
    $\|\vct{x}_0\|_2 = 1$ and $\|\vct{x} - \vct{x}_0\|_2 \le
    \epsilon$, where $\epsilon > 0$.} If $c_0$ is sufficiently large,
  Lemma \ref{lem:core} implies that with probability at least
  $1-O(e^{-\gamma m}(1+ 2/\epsilon)^n)\geq 1-O(e^{-\gamma' m})$,
  for all $\vct{x}_0 \in \cN_\epsilon$, there exists $\mtx{Y} =
  \cA^*(\vct{\lambda})$ obeying
\begin{subequations}\label{eq:Y}
\begin{align}
\|\mtx{Y}_{T_0^\perp} + 1.7 \mtx{I}_{T_0^\perp}\| & \le 0.1\label{eq:YTp}\\
\|\mtx{Y}_{T_0}\|_F & \le 0.15\label{eq:YT}
\end{align}
\end{subequations}
and $\|\vct{\lambda}\|_\infty \le 7/m$ (we wrote $T_0$ in place of $T$
for convenience). Note that this gives 
\begin{equation*}
 \label{eq:Yopnorm}
 \|\mtx{Y}\|\leq  \|\mtx{Y}_{T_0}\| + \|\mtx{Y}_{T_0^\perp}\| \le 0.15 + 1.8 < 2.
\end{equation*}

Consider now an arbitrary unit-normed vector $\vct{x}$ and let
$\vct{x}_0 \in \cN_\epsilon$ be any element such that $\|\vct{x} -
\vct{x}_0\|_2\leq \epsilon$. Set $\mtx{\Delta}
=\mtx{x}\mtx{x}^T-\vct{x}_0 \vct{x}_0^T$, which obeys
 \begin{equation*}
 \label{ineq:diff}
 \|\mtx{\Delta}\|_F \leq \|(\vct{x}(\vct{x} -\vct{x}_0)^T)\|_F +
 \|(\vct{x}-\vct{x}_0) \vct{x}_0^T\|_F =  \|\vct{x}\|_2\|\vct{x}-\vct{x}_0\|_2 + \|\vct{x} - \vct{x}_0\|_2 
 \|\vct{x}_0\|_2 \leq 2\epsilon.
\end{equation*}
Suppose $\mtx{Y}$ is as in \eqref{eq:Y} and let $T$ be $\{\mtx{X} =
\vct{y} \vct{x}^T + \vct{x} \vct{y}^T : \vct{y} \in \R^n\}$. We have
\[
  \mtx{Y}_{T^\perp} + 1.7 \mtx{I}_{T^\perp}  = (\mtx{I} -
  \vct{x}\vct{x}^T) \mtx{Y} (\mtx{I} - \vct{x}\vct{x}^T) + 1.7(\mtx{I}
  -  \vct{x} \vct{x}^T) 
   = \mtx{Y}_{T_0^\perp} + 1.7 \mtx{I}_{T_0^\perp} - \mtx{R},
\]
where 
\[
\mtx{R} = \mtx{\Delta}\mtx{Y}(\mtx{I}-\vct{x}_0\vct{x}_0^T) +
(\mtx{I}-\vct{x}_0\vct{x}_0^T)\mtx{Y}\mtx{\Delta} -
\mtx{\Delta}\mtx{Y}\mtx{\Delta} + 1.7 \mtx{\Delta}. 
\] 
Since 
\[
\|\mtx{R}\|_2 \le 2 \|\mtx{Y}\| \|\mtx{\Delta}\| \|\mtx{I} - \vct{x}_0
\vct{x}_0^T\| + \|\mtx{Y}\| \|\mtx{\Delta}\|^2 + 1.7 \|\mtx{\Delta}\| \le
11.4 \epsilon + 8 \epsilon^2, 
\]
we see that the first condition of Lemma \ref{lem:inexact} holds
whenever $\epsilon$ is small enough. For the first condition, 
\[
\mtx{Y}_T = \vct{x}\vct{x}^T \mtx{Y} +(\mtx{I}-\vct{x}\vct{x}^T) 
\mtx{Y} \vct{x}\vct{x}^T = \mtx{Y}_{T_0} + \mtx{R},
\]
where 
\[
\mtx{R} = \mtx{\Delta}\mtx{Y}(\mtx{I}-\vct{x}_0\vct{x}_0^T) +
(\mtx{I}-\vct{x}_0\vct{x}_0^T) \mtx{\Delta}\mtx{Y} -
\mtx{\Delta}\mtx{Y}\mtx{\Delta}.
\] 
Since $\mtx{\Delta}$ has rank at most 2, $\rank(\mtx{R}) \le 2$, and 
\[
\|\mtx{R}\|_F \le \sqrt{2} \|\mtx{R}\| \le \sqrt{2}\Bigl(2\|\mtx{Y}\|
\|\mtx{\Delta}\| \|\mtx{I} - \vct{x}_0 \vct{x}_0^T\| + \|\mtx{Y}\|
\|\mtx{\Delta}\|^2\Bigr) \le 8\sqrt{2} (\epsilon + \epsilon^2).
\]
Choosing $\epsilon$ sufficiently small concludes the proof of the
corollary. 
\end{proof}

\subsection{Stability} 
\label{sec:stability} 

To see why the stability result \eqref{eq:stability} is optimal,
suppose without loss of generality that $\|\vct{x}_0\|_2 = 1$.
Further, imagine that we are informed that $\|\vct{w}\|_1 \le \delta
m$ for some known $\delta$. Since $\|\cA(\vct{x}_0\vct{x}_0^*)\|_1
\approx m$, it would not be possible to distinguish between solutions
of the form $(1+\lambda) \vct{x}_0 \vct{x}_0^*$ for $\max(0,1-\delta)
\lesssim 1 + \lambda \lesssim 1 + \delta$. Hence, the error in the
Frobenius norm may be as large as $\delta \|\vct{x}_0 \vct{x}_0^*\|_F
= \delta$, which is what the theorem gives.

We now turn to the proof of Theorem \ref{teo:stability}. We do not
need to show the second part as the perturbation argument is exactly
the same as in \cite[Theorem 1.2]{PhaseLift2}. The argument for the
first part follows that of the earlier Lemma \ref{lem:inexact}, and
makes use of the existence of a dual certificate $\mtx{Y} =
\mathcal{A}^*(\vct{\lambda})$ obeying the conditions of this lemma.

Set $\hat{\mtx{X}} = \vct{x}_0 \vct{x}_0^* + \mtx{H}$.  Since
$\hat{\mtx{X}}$ is feasible, $\mtx{H}_{T^\perp} \succeq \mtx{0}$ and
$\langle \mtx{H}, \mtx{Y} \rangle = \<\mathcal{A}(\mtx{H}),
\vct{\lambda}\>$.  First,
\begin{align*}
  \langle \mtx{H}_T, \mtx{Y}_T \rangle
  & = \<\mathcal{A}(\mtx{H}), \vct{\lambda}\> - \langle \mtx{H}_{T^\perp}, \mtx{Y}_{T^\perp}\rangle \\
  & \ge - \|\vct{\lambda}\|_\infty \|\cA(\mtx{H})\|_1 +
  \<\mtx{H}_{T^\perp}, \mtx{I}_{T^\perp}\> = -\frac{7}{m}
  \|\cA(\mtx{H})\|_1 + \tr(\mtx{H}_{T^\perp}).
\end{align*}
Second, we also have 
\[
\tr(\mtx{H}_{T^\perp})\geq {1\over
  {(1+1/8)m}}\|\cA(\mtx{H}_{T^\perp})\|_1 \geq  {1\over
  {(1+1/8)m}}(\|\cA(\mtx{H}_{T})\|_1 - \|\cA(\mtx{H})\|_1), 
\]
which by the same argument as before, yields
\[
\tr(\mtx{H}_{T^\perp}) \geq {0.73 \over {\sqrt{2}}}\|\mtx{H}_T\|_F -
{1\over {(1+1/8)m}}\|\cA(\mtx{H})\|_1.
\]
Since $|\langle \mtx{H}_T, \mtx{Y}_T \rangle| \leq \frac12
\|\mtx{H}_{T}\|_F$, we have established that\footnote{The careful
  reader will note that we can get a far better constant by observing
  that the proof of Theorem \ref{teo:main} also yields $\|\mtx{Y}_T\|_F \le
  1/4$. Hence, we have $\|\mtx{H}_T\|_F \le 4(8/9 + 7) {\|\cA(\mtx{H})\|_1}/{m}$.}
\[
\Bigl(\frac{0.73}{\sqrt{2}} - \frac{1}{2}\Bigr) \|\mtx{H}_T\|_F \le
  \Bigl(\frac{8}{9} + 7\Bigr)
    \frac{\|\cA(\mtx{H})\|_1}{m}. 
\]
Also, since $\mtx{H}_{T^\perp}$ is positive semidefinite 
\[
\|\mtx{H}_{T^\perp}\|_F \le \tr(\mtx{H}_{T^\perp}) \le \frac12 \|\mtx{H}_T\|_F +
\frac{7}{m} \|\cA(\mtx{H})\|_1
\]
so that 
\[
\|\mtx{H}\|_F \le C'_0 \frac{\|\cA(\mtx{H})\|_1}{m} \le 2C'_0 \frac{\|\vct{w}\|_1}{m} . 
\]
To see why the second inequality is true, observe that
\[
\|\vct{b} - \cA(\vct{x}_0 \vct{x}_0^* + \mtx{H})\|_1 = \|\vct{w} -
\cA(\mtx{H})\|_1 \le \|\vct{b} - \cA(\vct{x}_0 \vct{x}_0^*)\|_1 =
\|\vct{w}\|_1,
\]
which gives $\|\cA(\mtx{H})\|_1 \le 2 \|\vct{w}\|_1$ by the triangle
inequality. The proof is complete.

\small
\subsection*{Acknowledgements}
E.~C.~is partially supported by AFOSR under grant FA9550-09-1-0643 and
by ONR under grant N00014-09-1-0258.  This work was partially
presented at the University of California at Berkeley in January 2012,
and at the University of British Columbia in February 2012.

\bibliographystyle{plain} 
\bibliography{ImprovedPL}

\begin{thebibliography}{1}

\bibitem{Edidin}
R.~Balan, P.G. Casazza, and D.~Edidin.
\newblock On signal reconstruction without noisy phase.
\newblock {\em Appl.\ Comp.\ Harm.\ Anal.}, 20:345--356, 2006.

\bibitem{PhaseLift1}
E.~J. Cand{\`e}s, Y.~Eldar, T.~Strohmer, and V.~Voroninski.
\newblock Phase retrieval via matrix completion.
\newblock Under revision, {\em SIAM J. on Imaging Sciences}, 2011.

\bibitem{PhaseLift2}
E.~J. Cand{\`e}s, T.~Strohmer, and V.~Voroninski.
\newblock Phaselift: Exact and stable signal recovery from magnitude
  measurements via convex programming.
\newblock To appear in {\em Comm. Pure Appl. Math.}, 2011.

\bibitem{DemanetHand}
L.~{Demanet} and P.~{Hand}.
\newblock {Stable optimizationless recovery from phaseless linear
  measurements}.
\newblock {\em ArXiv e-prints}, August 2012.

\bibitem{VershyninRMT}
R.~Vershynin.
\newblock Introduction to the non-asymptotic analysis of random matrices.
\newblock In Y.~C. Eldar and G.~Kutyniok, editors, {\em Compressed Sensing:
  Theory and Applications}, pages 210--268. Cambridge University Press, 2012.

\end{thebibliography}

\end{document}